\newcommand{\var}[1]{\text{\lstinline+#1+}}
\def\cA{\ensuremath{{\mathcal A}}}      \def\cC{\ensuremath{{\mathcal C}}}   
\def\cI{\ensuremath{{\mathcal I}}}         
      \def\cO{\ensuremath{{\mathcal O}}}   \def\cP{\ensuremath{{\mathcal P}}}
\def\cQ{\ensuremath{{\mathcal Q}}}      
\def\cT{\ensuremath{{\mathcal T}}}
\newcommand{\set}[1]{\left\{ #1 \right\}}
\newcommand{\eqnlabel}[1]{\label{eq:#1}}
\newcommand{\nakedeqnref}[1]{\ref{eq:#1}}
\newcommand{\eqnref}[1]{Equation~\nakedeqnref{#1}}
\newcommand{\lemmalabel}[1]{\label{lemma:#1}}
\newcommand{\nakedlemmaref}[1]{\ref{lemma:#1}}
\newcommand{\lemmaref}[1]{Lemma~\nakedlemmaref{#1}}
\newcommand{\seclabel}[1]{\label{sec:#1}}
\newcommand{\nakedsecref}[1]{\ref{sec:#1}}
\newcommand{\secref}[1]{Section~\nakedsecref{#1}}
\newcommand{\Reals}{\ensuremath{\mathbb{R}}}
\title{The Smart Contract Model}
\author{
Yackolley Amoussou-Guenou\inst{1}
\and 
Maurice Herlihy \inst{2} \orcidID{0000-0002-3059-8926} \thanks{Part of this work was done while visiting LIP6, France, supported by the CNRS Fellow Ambassadeur program.}
\and
Sucharita Jayanti \inst{2}
\and
Maria Potop-Butucaru\inst{3} 
\and
Sergio Rajsbaum\inst{4} \orcidID{0000-0002-0009-5287} \thanks{Part of this work was done while visiting IRIF and LIP6, France.}
}
\institute{
Universit\'e Paris-Panth\'eon-Assas, CRED, Paris, France
\and
Brown University Computer Science Dept, Providence RI 02912, USA
\and
Sorbonne Universit\'e, LIP6, Paris, France
\and
Instituto de Mathem\'aticas, UNAM, Mexico.
}
\date{}
\begin{document}

\maketitle
\begin{abstract}
Many of the problems that arise in the context of blockchains and decentralized
finance can be seen as variations on classical problems of distributed computing.
The \emph{smart contract model} proposed here is intended to capture both the
similarities and the differences between classical and blockchain-based models of distributed computing.
The focus is on \emph{cross-chain} protocols
in which a collection of parties,
some honest and some perhaps not,
interact through trusted smart contracts residing on multiple, independent ledgers.

While cross-chain protocols are capable of general computations,
they are primarily used to track ownership of assets
such as cryptocurrencies or other valuable data.
For this reason, the smart contract model differs in some essential ways from
familiar models of distributed and concurrent computing.
Because parties are potentially Byzantine,
tasks to be solved are formulated using elementary game-theoretic notions,
taking into account the utility to each party of each possible outcome.
As in the classical model,
the parties provide task inputs and agree on a desired sequence of
proposed asset transfers.
Unlike the classical model,
the contracts, not the parties, determine task outputs in the form of executed asset transfers,
since they alone have the power to control ownership.

\end{abstract}

\section{Introduction}
\seclabel{intro}
The rise of decentralized finance and distributed ledger
technology presents both an opportunity and a challenge
to the distributed computing community.

The \emph{opportunity}: decentralized finance based on blockchains
has become a multi-billion dollar industry,
providing what seems to be an ideal application for the kinds of
fault-tolerant distributed algorithms and protocols
the distributed computing community has studied for decades.
Specific companies and cryptocurrencies may come and go,
but the ancient problem of how mutually untrusting parties can securely and efficiently exchange
valuable assets will only grow in importance as the world becomes more interconnected.

The \emph{challenge}:
classical 
 shared-memory and message-passing models
do not accurately mirror the realities of blockchains and smart contracts.
This paper proposes the \emph{smart contract model},
which we hope will help in adapting protocols and algorithms
from classical distributed computing to the world of decentralized, tamper-proof ledgers.
Our focus is on \emph{cross-chain} protocols
executed by a collection of active parties, some who follow the protocol, and some who might not.
interacting through shared smart contracts deployed on different blockchains. 

The smart contract model borrows from classical distributed computing models,
but it differs in some essential ways.
For example, in classical models,
a task is specified by its inputs and outputs.
Each process chooses an input,
and after communicating with the others via shared memory or messages,
it chooses an output consistent with the task specification.
In shared-memory models, 
shared objects typically keep track of a protocol's state as it executes.

By contrast,
while cross-chain protocols are capable of general computations,
they are primarily used to track ownership of assets
such as cryptocurrencies or other valuable data with a digital representation.
Smart contracts implement the ledgers that control
asset ownership.
Contracts are trusted, deterministic automata that change state in response
to authenticated messages from parties.

To carry out a task,
the parties agree on \emph{protocol}:
a sequence of steps for exchanging their assets.
\emph{Compliant} parties follow the agreed-upon protocol,
while \emph{deviating} parties can depart from the protocol in arbitrary ways.
No party can assume another will be compliant,
so any protocol must be designed to ensure both \emph{safety}:
no compliant party ends up worse off than it started,
and \emph{liveness}:
if all parties comply, all parties end up better off.

Because parties are autonomous (not under the protocol's control)
and potentially deviating,
tasks to be solved are formulated using elementary game-theoretic notions,
expressed in terms of the utility to each party of each possible outcome.
As in the classical model, the parties provide the inputs,
typically expressed as an agreed-upon sequence of proposed asset transfers.
Because some parties may deviate,
the transfers that actually take place may
differ from those agreed upon.
In the end, unlike in the classical model,
the \emph{contracts}, not the parties,
decide task outputs in the form of executed asset transfers,
since the contracts alone are trusted to control ownership of valuable assets.

The contracts are similar to shared objects in shared-memory distributed computing.
They are  deterministic state machines, 
but unlike the parties,
they are trustworthy.
Contract state and contract code are observable by any party,
since that state is replicated on multiple public repositories.
For example,
in an \emph{atomic cross-chain swap} task,
where two parties exchange assets that
reside on distinct ledgers.
The protocol must guarantee that if both parties conform to the protocol, 
then the swap takes place.
If one party deviates,
the other conforming party ends up no worse off than it started.

\section{Related Work}
\seclabel{related}
Various papers have described formal models of various types of ledgers.
Our focus differs in two ways. 
First, we are interested in the behavior of networks of smart contracts,
and view the ledger itself as a ``black box'' platform on top of which the contracts
are implemented.  
In this model, 
smart contracts exist on their own.
The contracts form a network of state machines,
where each contract exports a set of \emph{functions} (methods) called by parties,
operating on contract stae.
Each function's code is visible to all parties, and each party trusts the code will be executed correctly.
It is immaterial at this level of abstraction
if the smart contracts are implemented by a blockchain,
by a tamper-proof replicated database,
or some other future technology.
Second, our focus is on \emph{cross-chain transactions}, meaning that each smart contract runs on a different ledger.

Perhaps the first formal model of a permissionless blockchain system was proposed by Garay et al.~\cite{GarayKL2015,GarayKL2015-j2024} for Bitcoin. 
Later papers
aimed at modeling at a higher level of abstraction.
Anceaume et al.~\cite{AnceaumeLPT17} were the first to make a connection between the Bitcoin ledger and classical distributed shared objects. They introduced the notion of  a \emph{ Distributed Ledger Register} (DLR) where the value of the register has a tree topology instead of a single value as in the classical theory of distributed registers. 
The vertices of the tree are blocks of transactions cryptographically linked. 
Later, Anta et al.~\cite{AntaKGN18} introduced the {\em Distributed Ledger Object} (DLO) formalism that  defines the ledger object as an ordered sequence of records, abstracting away from registers.  This work has been extended~\cite{AntaGN19} to Multi-Distributed Ledger Objects (MDLO), the result of aggregating multiple Distributed Ledger Objects - DLO (a DLO is a formalization of the blockchain) and that supports append and get operations of records (e.g., transactions)  from multiple clients concurrently.
To model the behavior of distributed ledgers at runtime, Anceaume et al.\cite{AnceaumePLPP19} introduced the {\em Blockchain Abstract Data Type} abstraction, which provides a lower-level abstraction of Distributed Ledgers, suitable for both permissioned and permissionless systems.
Rajsbaum and Raynal~\cite{RR19cacm} provide a discussion of how ledgers are the next logical step in a long history of implementations of a sequential specification on a concurrent system. 
Other lines of research include Zappal{\`{a} et al.~\cite{ZappalaBPS21} who propose a game theoretical framework to formally characterize the robustness of blockchains systems in terms of resilience to rational deviations and immunity to Byzantine behaviors. The framework is sufficiently general to characterize the robustness of various blockchain protocols (e.g.  Bitcoin, Tendermint, Lightning Network, a side-chain protocols and a cross-chain swap protocols).  
Our model goes beyond  other models in the literature limited to simple payments~\cite{FreyGRT2021,GuerraouiKMPS2018,SliwinskiW2019} or swaps \cite{BelottiMPS20}.

Other research investigates formal models for blockchain-based contracts (overview in Bartoletti \emph{et al}~\cite{BBACZ21}).
In contrast to our work, they
address formal verification in the context of specific chains, such as Bitcoin or Ethereum ,
while our goal is a model reasonably independent of any particular system.

As a precursor to cross-chain coordination,
in the classic \emph{fair exchange} problem,
Alice and Bob wish to exchange digital assets
in such a way that either the exchange takes place,
or each party keeps its token.
Fair exchange has been widely studied.
It encompasses other important problems, such as contract signing,
and has been also studied for more than two participants.
It is well-known that this problem is unsolvable without a trusted third party~\cite{Pagnia1999OnTI}
because of its relation to consensus.
Fair exchange protocols remain relevant because smart contracts can be viewed
as trusted third parties.
See Tas et al.~\cite{ErtemSeresZMKBN2024} for references and applications to data storage.

The swap protocol of \secref{example} is synchronous.
Lys et al.~\cite{LysMP21} describe an alternative protocol
in a partially synchronous setting.

\section{Rationale}
\seclabel{real}
This section presents an informal description
how cross-chain coordination works in practice,
with the goal of motivating the formal model presented in the next section.

A \emph{distributed ledger}\footnote{
Following common usage, we use ``ledgers'', ``chains'', and ``blockchains''
interchangeably in informal discourse,
even though ledgers need not be implemented using blockchain technology.
} is a publicly-readable tamper-proof distributed database
used to track ownership of  \emph{assets},
which may be cryptocurrencies,
financial instruments, tokens, concert tickets,
or any data of value.

\emph{Parties} own \emph{assets}.
A party can be a person, an organization, or even a bot acting on another party's behalf.
Parties are untrustworthy:
they may depart from any agreed-upon protocol in arbitrary ways,
including irrational ways that work against their own interests.
We assume only that every interaction includes at least one honest party who faithfully follows the agreed-upon protocol. 
While it is reasonable to assume, for example,
that a super-majority of the validators securing a blockchain are honest,
it is not reasonable (or prudent) to assume the same for a small number
of parties willing to participate in a one-shot financial transaction or other distributed task.
 
Parties may own assets controlled by one or more ledgers.
Parties are active agents, initiating and reacting to communication,
although parties do no typically communicate directly with other parties.
Instead,
interactions among parties are mediated by \emph{smart contracts}.
Each smart contract resides on a single ledger,
and different smart contracts may reside on different ledgers.
Contracts are automata that change state
in response to authenticated messages from parties. These messages are  structured as calls to functions exported by contracts.
The code of a smart contract must be deterministic, because it is repeatedly re-executed by mutually-suspicious parties
to check the correctness of earlier executions.\footnote{Even so-called ``probabilistic smart contracts''~\cite{ChatterjeeGP2019}
are actually deterministic because once they are written to the ledger,
they return the same results when re-executed.}

A smart contract's code and current state are public,
so a party calling a contract knows what code will be executed,
and trusts the contract to execute that code correctly.\footnote{Concurrent
calls to the same contract may be executed in
a non-deterministic order.}
Since the state of a smart contract is public,
a party can read a contract's state at any time.

In accordance with current practice,  contracts are passive:
a contract on one ledger cannot send messages directly to  a contract on another ledger
(because such network communications cannot be deterministically replayed). 
A contract $A$ on one chain can learn of a state change of
a contract $B$ on another chain only if some party explicitly
informs $A$ of $B$'s change,
perhaps accompanied by evidence that the informing party is honest.
For ease of exposition and without loss of generality,
we will assume that distinct smart contracts reside
on distinct ledgers.

A  \emph{cross-chain task} (or \emph{task} for short) specifies the problem that   multiple parties want to solve.
For example,
Alice, Bob, and Carol might agree to do a three-way token swap,
where each token is managed on a distinct ledger.
Alice will transfer token $A$ to Bob,
Bob will transfer token $B$ to Carol,
and Carol will transfer token $C$ to Alice.

A \emph{cross-chain protocol} is a sequence of actions for each party  to solve a cross-chain task.
Honest parties who follow the protocol are said to be \emph{compliant},
while dishonest parties who do not follow the protocol are said to be \emph{deviating}.
A party's compliance can often be monitored by the other parties,
but never enforced.
Unlike most conventional agreement protocols in Byzantine models,
we do not assume that a supermajority of participants is honest.
Instead,
we assume only that there is at least one compliant party in every protocol execution
(because
executions without compliant parties are not interesting).
While it is reasonable to assume, for example,
that a super-majority of the validators securing a blockchain are honest,
it is not reasonable (or prudent) to assume the same for a small number
of parties willing to participate in a particular financial transaction.

In our swap example,
Alice, Bob, and Carol take turns placing their assets into escrow
(by calling the distinct contracts managing the distinct asset types),
and when all assets are safely escrowed,
each party triggers a transfer from its counterparty. 
If some party deviates from the protocol,
then either all transfers are canceled,
or the deviating party loses its own asset.
See~\cite{Herlihy2018} for a complete protocol description,
and~\secref{example} for a two-party swap protocol.

The swap task's \emph{input party vector} describes who wants to transfer what to whom.
It makes sense for these inputs to be controlled by the participating parties. 
The task's \emph{input contract vector} describes who who owns which assets at the start.
The task's \emph{output contract vector} describes how asset ownership has changed:
who transferred what to whom.
It makes sense for these outputs to be controlled by the  contracts,
not the parties.
(In real life, banks, not their customers,
have the final word on customer account balances.)

What does it mean for a protocol to correctly implement a task?
By analogy with mainstream distributed computing,
we might naively require a correct protocol to be \emph{all-or-nothing}:
either all intended transfers take place, or none does,
a property also known as \emph{atomicity}.
While doing nothing should always be a legitimate course of action when things go wrong,
it may not always be possible.
Because parties can depart arbitrarily (and even irrationally) from the protocol,
we cannot prevent, for example, 
Alice from sending Bob two tokens instead of one.
Bob, if he is rational, will pocket the extra token and continue the protocol.
(Perhaps Alice and Bob are secretly laundering assets.)
In the end, Alice has one token less than the task specifies,
and Bob has one token more.
Here is a protocol outcome that is neither ``all'' nor ``nothing'',
yet this outcome is perfectly acceptable to the conforming party Bob.

It follows that when defining correctness in the cross-chain world,
it will be helpful to borrow
some elementary notions from game theory.

A party's \emph{utility} in an execution is a quantitative
measure of how much that party considers itself to be better
or worse of after the execution.
A configuration
is in a \emph{Nash equilibrium} if no party can increase its
utility by deviating while the other parties remain compliant.
Ensuring that a protocol is in Nash equilibrium is
necessary, but not sufficient.
A \emph{Sybil attack} occurs when one individual or organization
secretly controls a \emph{coalition} of multiple parties.
The controller may be willing to incur a small loss of utility
for one coalition member in return for a larger gain for another.
For this reason,
common sense demands that
protocols should be in Nash equilibrium across coalitions:
no coalition can increase its collective utility by deviating
while the other parties remain compliant\footnote{
This condition is not the same as a \emph{strong Nash equilibrium}~\cite{StrongNashEquilibrium},
which requires \emph{all} coalition members to gain utility by deviating.}.

No rational party would agree to participate in a protocol
that did not satisfy the following common-sense conditions:
\begin{itemize}
\item
  \emph{Coalition Nash Equilibrium}:
  No coalition of parties can increase its collective utility by deviating from the protocol
  while the others comply.

\item
  \emph{Liveness}:
  If all parties are compliant,
  then all and only agreed-upon asset transfers take place.
  
\item
  \emph{Safety}: No compliant party can end up ``worse off'',
  even if other parties deviate from the protocol\footnote{
  We treat as negligible costs in transaction fees and wasted time incurred by failed exchanges.
  In practice,
  some protocols compensate jilted parties~\cite{XueH2021} while others do not.}.
\end{itemize}
These conditions will be restated formally in the next section.

Operationally,
a cross-chain protocol is executed as follows.
Communication between parties and contracts occurs in synchronous rounds.  
Each round has four phases: 
\begin{enumerate}
\item 
  Parties optionally send messages to contracts;
\item
  The contacts receive messages (possibly sent in earlier rounds)
  and execute local computations;
\item 
  Parties read the contracts' new states;
\item 
  Parties execute local computations based on the states read and optionally prepare messages for the next round. 
\end{enumerate}
Message delivery may be delayed by congestion in the network, or in a chain's consensus protocol.
Different chains may accept messages at different rates.
(Most chains allow parties to pay higher fees to request speedier message delivery.)

Nevertheless,
we do assume that there is a known bound
on the number of rounds it takes for a message to be delivered on any chain.
It is usually convenient to express delay bounds in terms of time and clocks,
rather than rounds and round numbers.
Most blockchains produce new blocks at a more-or-less constant rate,
so the current block number is often used as a proxy for the current time.
We assume there is a known duration $\Delta > 0$
such that any message sent by a party to a contract
will be received, processed by the contract,
and observed by the other parties,
within duration $\Delta$.

The model encompasses synchronous execution with a known upper bound on message delivery delay
because this is the model currently presented by distributed ledgers such as
Bitcoin, Ethereum, Cardano, and others.
Timing upper bounds are important in practice because many decentralized finance protocols
 involve some form of \emph{escrow},
where a contract assumes temporary control of an asset that is intended to change hands.
(See, for example, the two-party swap example in \secref{example}.)
To be effective, each escrow needs an appropriate timeout:
if the exchange fails to complete in a reasonable amount of time,
the escrowed asset should be refunded to its original owner.
An asset should not be refunded prematurely (compromising liveness)
or held in escrow forever (compromising safety).

Because contracts cannot send messages to one another, they cannot directly observe one another's states.\footnote{In practice,
contracts \emph{on the same chain} can call one another, but we are interested in problems
where the contracts reside on distinct chains.}
Deviating parties may communicate with one another through channels
hidden from compliant parties.
Failure to send a protocol message is detected by a timeout. 
In  \secref{formal} we will define in more detail the model including timing assumptions.

\section{Formal Definitions}
\seclabel{formal}
This section presents
the \emph{smart contract model}.
It formally defines the notion of parties, contracts, tasks, and protocols.

\subsection{Cross-chain Systems}
\seclabel{crosschain_model}
 A \emph{cross-chain system} $CCS=(\cP,\cC)$ is composed of a finite set of \emph{parties} $\cP=\{P_1, \ldots, P_m\}$, $m \geq 2$ and a finite set of \emph{smart contracts} $\cC=\{C_1, \ldots, C_n\}$, $n \geq 2$. 
 
Parties and smart contracts are both modeled as \emph{interface automata}~\cite{AlfaroH01}.
 An interface automaton is a tuple $IA=(V, V^{init}, \cA^I, \cA^O, \cA^H, \cT)$  where:
 \begin{itemize}
     \item $V$ is a set of states;
     \item $V^{init} \subseteq V$ is a set of initial states;
     \item $\cA^I$, $\cA^O$, $\cA^H$ are mutually disjoint sets of input, output and internal actions;
     \item $\cT \subseteq V \times \cA \times V$ is the set of steps where $\cA=\cA^I \cup \cA^O \cup \cA^H$   
 \end{itemize}
An action $a \in \cA$ is \emph{enabled} at some state $v \in V$ if there is a step $(v,a,v^\prime) \in \cT$ for some $v^\prime \in V$.
An \emph{execution fragment}  of an interface automaton is a finite sequence of alternate states and enabled actions $v_0,a_0,v_1,a_1,v_2 \ldots v_{t-1},a_{t-1},v_t$ such that $(v_i,a_i,v_{i+1}) \in \cT$,$  \forall i \in \{0, \dots, t-1\}$.

 Let $IA_P=(V_P, V_P^{init}, \cA_P^I, \cA_P^O, \cA_P^H, \cT_P)$ be the interface automaton of party $P \in \cP$ and $IA_C=(V_C, V_C^{init}, \cA_C^I, \cA_C^O, \cA_C^H, \cT_C)$ be the interface automaton of a smart contract $C \in \cC$. 
 Let $Shared(\cA_P, \cA_C)=\cA_P \cap \cA_C$ be the common actions of $IA_P$ and $IA_C$. 
 $IA_P$ and $IA_C$ are \emph{composable} if the following four properties hold:
 \begin{align*}
   \cA_P^I \cap \cA_C^I &= \emptyset\\
   \cA_P^O \cap \cA_C^O &= \emptyset\\
   \cA_P^H \cap \cA_C &= \emptyset\\
   \cA_C^H \cap \cA_P = \emptyset
 \end{align*}
 The \emph{product}  of two composable interface automata $IA_P$ and $IA_C$ is the interface automaton
 \begin{align*}
   IA_{P \otimes C}=(&V_P \times V_C,\\
                  &V_P^{init} \times V_C^{init},\\
                  &\cA_P^I \cup \cA_C^I \setminus  Shared(\cA_P, \cA_C),\\
                  &\cA_P^O \cup \cA_C^O \setminus  Shared(\cA_P, \cA_C)\\
                  &\cA_P^H \cup \cA_C^H \cup  Shared(\cA_P, \cA_C),\\
                  &\cT_P \otimes  \cT_C)
    \end{align*}
 where $\cT_P \otimes  \cT_C$ is defined as:
 \begin{equation*}
\cT_P \otimes  \cT_C = TR_P \cup TR_C \cup TR_{PC}
 \end{equation*}
and
 \begin{align*}
 TR_P &=\set{((v,u),a,(v^\prime,u)) \mid (v,a,v^\prime) \in \cT_P \wedge a \notin Shared(P,C) \wedge u \in V_C}\\
 TR_C &=\set{((v,u),a,(v,u^\prime))  \mid  (u,a,u^\prime) \in \cT_C \wedge a \notin Shared(P,C) \wedge v \in V_P}\\
 TR_{PC}&=\set{((v,u),a,(v^\prime,u^\prime))  \mid  (v,a,v^\prime) \in \cT_P \wedge (u,a,u^\prime) \in \cT_C \wedge a \in Shared(P,C)}
 \end{align*}
The interface automaton of a cross-chain system $CCS=(\cP,\cC)$ is 
the composition of the interface automata of parties in $\cP$ and interface automata of smart contracts in $\cC$ as proposed by Alfaro and Henzinger~\cite{AlfaroH01}.
 
 In the sequel if $V$ is a vector, $V[i]$ is $V$'s $i^\text{th}$ element.
If $D$ is a domain, $2^D$ is the powerset of $D$.

\subsection{Cross-chain Tasks}
 
 Given a cross-chain system $\text{CCS}=(\cP,\cC)$ where $\cP$ is a set of $m$ parties  and $\cC$ a set of $n$ smart contracts
 a \emph{cross-chain task} is a tuple $(\cI_P,\cI_C,\cO_C, U)$, where:
 \begin{itemize}
 \item 
   $\cI_P$ is a set of $m$-element \emph{input party vectors},
   representing each party's input to the task,
 \item
   $\cI_C$ is a set of $n$-element \emph{input contract state vectors},
   representing each contract's state before executing the task,
 \item
   $\cO_C$ is a set of $n$-element \emph{output contract state vectors},
   representing each contract's state after executing the task, and
 \item
   $U:  \cI_P \times \cI_C \times \cO_C \to \Reals^m$ is a \emph{utility function}
   that characterizes how each party values each possible transition.
 \end{itemize}
 The utility for an individual party $P$ is written $U(I_P,I_C,O_C)[P]$,
 and the utility for a coalition $\cQ \subset \cP$ is defined to be
 \begin{equation*}
   U(I_P,I_C,O_C)[\cQ] := \sum_{Q \in \cQ} U(I_P,I_C,O_C)[Q].
 \end{equation*}
As mentioned,
the utility function captures the notion of ``better off'' and ``worse off'' for parties.
A \emph{transition} is a triple $(I_P,I_C,O_C) \in \cI_P \times \cI_C \times \cO_C$.
Party $P$ considers transition $(I_P,I_C,O_C)$ \emph{acceptable}
if $U(I_P,I_C,O_C)[P] \geq 0$,
and \emph{preferred} if $U(I_P,I_C,O_C)[P] > 0$.
A transition is \emph{acceptable} if it is acceptable to all parties,
and \emph{preferred} if it is preferred by all parties.

Note that a task's set of output contract states must encompass all reachable output contract states,
even those produced when parties deviate.
Because parties are autonomous (and possibly Byzantine),
a task definition is not expressed in terms of parties' states, 
only in terms of the input values they provide to the task.

For a task to be \emph{feasible} (capable of solution),
it must satisfy certain additional feasibility conditions.
Because deviating parties can always obstruct progress,
the null transition must always be acceptable.
For any input vectors
$I_P \in \cI_P$ and $I_C \in \cI_C$,
\begin{equation}
  \eqnlabel{feas:nothing}
  \cI_C \subset \cO_C \text{ and } (I_P,I_C,I_C) \text{ is an acceptable transition}.
\end{equation}
Each task must be solvable in principle:
for all input vectors $I_P \in \cI_P, I_C \in \cI_C$,
there must exist a preferred transition:
\begin{equation}
  \eqnlabel{feas:prefer}
  \exists O_C \in \cO_C \text{ such that } (I_P,I_C,O_C) \text{ is a preferred transition}.
\end{equation}
A task will have no solution if a deviating party $Q$ can trick a compliant
party $P$ into negative utility simply by lying about $Q$'s input.
For any two input party vectors $I_P, I_P'$, and for every party $P$,
\begin{equation}
  \eqnlabel{feas:inputs}
  \text{if } I_P[P] = I_P'[P] \text{ and } U(I_P,I_C,O_C)[P] \geq 0,
  \text{ then } U(I_P',I_C,O_C)[P] \geq 0.
\end{equation}
A hidden input might shift a compliant party's utility from from one non-negative quantity to another,
but never from non-negative to negative.

\subsection{Cross-chain Protocols}
To execute a task,
parties agree on a sequence of contract calls called a \emph{cross-chain protocol}.

As noted,
compliant parties follow the agreed-upon protocol,
while deviating parties do not.

When describing a protocol execution,
it is convenient to indicate which parties are compliant by a
\emph{compliance set} $\cQ \subset \cP$,
where $P \in \cQ$ means $P$ is compliant.

Formally, given a cross-chain system $CCS=(\cP,\cC)$ where $\cP$ is a set of $m$ parties and $\cC$ is a set of $n$ smart contracts, a cross-chain 
protocol is a tuple $(\cI_P,\cI_C,\cO_C,\Xi)$,
where
\begin{itemize}
\item 
  $\cI_P$ is a set of $m$-element \emph{input party vectors},
\item
  $\cI_C$ is a set of $n$-element \emph{input contract state vectors},
\item
  $\cO_C$ is a set of $n$-element \emph{output contract state vectors},
\item
  $\Xi: \cI_P \times \cI_C \times 2^\cP \to 2^{\cO_C}$,
  the \emph{execution function}, is a map
  that carries a input party vector, an input contract state vector,
  and a compliance set
  to a set of output contract state vectors representing possible outcomes.
\end{itemize}
The protocol itself is the interface automaton obtained by the composition of interface automata modeling parties in $\cP$ and interface automata modeling smart contracts in $\cC$.

The execution of a cross-chain protocol is an execution fragment (see Section \secref{crosschain_model}) starting in a state $(I_P,I_C) \in \cI_P \times \cI_C$ and terminating in a state in $\Xi(I_P,I_C,\cQ)$ where $\cQ \subseteq \cP$ indicates which parties were compliant during the protocol execution.   

A cross-chain protocol for a cross-chain task is \emph{correct} if each execution of the protocol satisfies the following properties:
\begin{itemize}
\item 
\emph{Coalition Nash Equilibrium:}
No coalition of parties can increase its collective utility
by deviating from the protocol while the others comply.
More precisely,
for all coalitions $\cQ \subset \cP$,
input vectors $I_P \in \cI_P, I_C \in \cI_c$,
conforming executions $O_C \in \Xi(I_P, I_C, \cP)$,
and $\cQ$-deviating executions $O_C' \in  \Xi(I_P, I_C, \cP \backslash \cQ)$,
conforming is the better strategy for $\cQ$
\begin{equation*}
U(I_P,I_C,O_C)[\cQ] \geq U(I_P,I_C,O_C')[\cQ].
\end{equation*}        
\item
\emph{Liveness:}
If all parties are compliant
then the protocol's transitions are preferred:
\begin{equation}
  \eqnlabel{protocol:prefer}
  (\forall O_C \in \Xi(I_P,I_C,\cP))\;
  (\forall P \in \cP)\;
  U(I_P,I_C,O_C)[P] > 0
\end{equation}

\item
\emph{Safety:}
  No compliant party ends up worse off:
\begin{equation*}
(\forall O_C \in \Xi(I_P,I_C,\cQ))\;
  (\forall Q \in \cQ)\;
  U(I_P,I_C,O_C)[Q] \geq 0.
  \end{equation*}
\end{itemize}

\subsection{Communication and Timing}
In a cross-chain system $CCS=(\cP,\cC)$,
parties in $\cP$ communicate with contracts in $\cC$ via messages.
Compliant parties do not communicate directly with other parties,
and contracts do not communicate directly with contracts.
Communication channels are \emph{authenticated}: a message's sender cannot be forged,
and \emph{reliable}: a channel does not create, lose, or duplicate messages.

As outlined earlier,
executions proceeds in synchronous rounds.
\begin{enumerate}
\item
  In the \emph{send phase},
  each party optionally sends a message to one or more contracts.
\item
  In the \emph{contract-local phase},
  each contract receives the messages in an arbitrary order
  and undergoes state changes in response.
\item
  In the \emph{read phase}
  each party reads the new contract state.
\item
  In the \emph{party-local phase},
  each party optionally prepares a new message.
\end{enumerate}

\section{Case Study: Two-Party Cross-Chain Swap}
\seclabel{example}
We illustrate an application of the model by analyzing a simple
\emph{two-party cross-chain swap} task,
along with a protocol~\cite{tiersnolan}.

\subsection{The Task}
Imagine that Alice is willing to exchange her asset $a$
for Bob's asset $b$, and vice-versa.
They employ two contracts:
$C_A$ controls ownership of $a$ on one ledger,
and $C_B$ controls ownership of $b$ on another ledger.
Alice and Bob have access to a \emph{cryptographic hash function}
$H(\cdot)$ where it is infeasible to reconstruct $s$ given $H(s)$.

Formally, the swap task is given by $(\cI_P,\cI_C,\cO_C, U)$, where
\begin{itemize}
\item
  $\cI_P$ consists of a single input party vector $I_P$
  with an entry for Alice, $I_P[A] = (B,a,b)$,
  and an entry for Bob, $I_P[B] = (A,b,a)$,
  indicating that each party is willing to swap its asset for the other's.

\item
  $\cI_C$ consists of a single input contract state vector $I_C$, with
  $C_A$'s entry $I_C[C_A] = [a \hookrightarrow A]$, (meaning $A$ owns $a$), and
  $C_B$'s entry $I_C[C_B] = [b \hookrightarrow B]$, (meaning $B$ owns $b$).

\item
  $\cO_C$ consists of four possible output contract state vectors:
  \begin{equation*}
    O_C = \begin{cases}
    [a \hookrightarrow A, b \hookrightarrow A]&\text{$A$ owns $a,b$}\\
    [a \hookrightarrow B, b \hookrightarrow A]&\text{$A$ owns $b$, $B$ owns $a$}\\
    [a \hookrightarrow A, b \hookrightarrow B]&\text{$A$ owns $a$, $B$ owns $b$}\\
    [a \hookrightarrow B, b \hookrightarrow B]&\text{$B$ owns $a,b$}.
  \end{cases}
  \end{equation*}
 
\item
Alice's utility function is:
\begin{equation}
  \eqnlabel{aliceUtility}
  U(I_P,I_C,O_C)[A] =
  \begin{cases}
    2 &\text{if } O_C = [a \hookrightarrow A, b \hookrightarrow A]\\
    1 &\text{if } O_C = [a \hookrightarrow B, b \hookrightarrow A]\\
    0 &\text{if } O_C = [a \hookrightarrow A, b \hookrightarrow B]\\
    -1 &\text{if } O_C = [a \hookrightarrow B, b \hookrightarrow B].
  \end{cases}
\end{equation}
Alice strongly prefers the outcome where she acquires both assets,
but also prefers the outcome where the swap takes place.
She accepts the outcome where assets do not change hands,
and she considers herself worse off
if Bob acquires her asset while retaining his own.
Bob's utility function is symmetric.
\end{itemize}

\subsection{The Protocol}

The protocol proceeds in rounds:
\begin{enumerate}
\item\label{aliceEscrows}
  Alice creates secret $s$ and hashkey $h=H(s)$.
  She instructs $C_A$ to assume temporary ownership of $a$,
  to transfer $a$ to Bob if he produces $s$ within two rounds,
  and to refund $a$ to Alice if Bob fails to meet that deadline.

\item\label{bobEscrows}
  Bob verifies that Alice has escrowed $a$ at $C_A$,
  If so, he learns $h$,
  and he instructs $C_B$ to assume temporary ownership of $b$,
  to transfer $b$ to Alice if she produces a matching $s$ within one round,
  and to refund $b$ to Bob if Alice fails to meet that deadline.
  If Bob sees that Alice has not escrowed $a$, he exits the protocol.

\item\label{aliceClaims}
  Alice verifies that Bob has escrowed $b$ at $C_B$.
  If so, she sends $s$ to $C_B$, making $s$ public,
  and causing $C_B$ to transfer $b$ to her.
  If Alice observs that Bob has not escrowed, she exits the protocol.
  If, b the end of this round, $C_B$ did not receive $s$ from Alice,
  then $C_B$ refunds $b$ to Bob.

\item\label{bobClaims}
  If Alice took possession of $b$, then Bob learns $s$ from $C_B$.
  He sends $s$ to $C_A$,
  causing $C_A$ to transfer $a$ to him, completing the swap.
  If, by the end of this round, $C_A$ did not receive $s$ from Bob,
  then $C_A$ refunds $a$ to Alice.
\end{enumerate}

Formally, the protocol is given by $(\cI_P,\cI_C,\cO_C,\Xi)$, where:
\begin{itemize}
\item
  $\cI_P, \cI_C$ and $\cO_C$ are defined as in the swap task above.
\item
  Since the task has only a single input party vector
  and a single input contract state vector,
  we can restrict the execution function
  $\Xi: \cI_P \times \cI_C \times 2^\cP \to \cO_C$
  to its compliance set argument alone
  $\Xi: 2^\cP \to \cO_C$.
  \begin{itemize}
  \item
    If both parties are compliant, the swap takes place:
    \begin{equation}
      \eqnlabel{bothComply}
      \Xi(\set{A,B}) = [a \hookrightarrow B, b \hookrightarrow A].
    \end{equation}
  \item
    Alice can deviate in several ways:
    \begin{equation*}
      \eqnlabel{bobComplies}
      \Xi(\set{B}) = \begin{cases}
        [a \hookrightarrow A, b \hookrightarrow B]
          &\text{Alice fails to escrow $a$ at Step \ref{aliceEscrows}}\\
        [a \hookrightarrow A, b \hookrightarrow B]
          &\text{Alice fails to claim $b$ at Step \ref{aliceClaims}}\\
        [a \hookrightarrow B, b \hookrightarrow B]
          &\text{Alice reveals $s$ to Bob after she escrows $a$, without claiming $b$}
      \end{cases}
    \end{equation*}

  \item
    Bob can deviate in several ways:
    \begin{equation*}
      \eqnlabel{aliceComplies}
      \Xi(\set{A}) = \begin{cases}
        [a \hookrightarrow A, b \hookrightarrow B]
          &\text{Bob fails to escrow $b$ at Step \ref{bobEscrows}}\\
        [a \hookrightarrow A, b \hookrightarrow A]
          &\text{Bob fails to claim $a$ at Step \ref{bobClaims}}
      \end{cases}
    \end{equation*}
  \end{itemize}

\item
  If both parties deviate, $\Xi(\emptyset)$ contains all four outcomes.
\end{itemize}

\subsection{The Proof}

\begin{lemma}[Liveness]
  \lemmalabel{liveness}
  If both parties are compliant,
  then for $O_C \in \Xi(\cP)$, 
  and $P \in \cP$,
  $U(I_P,I_C,O_C)[P] > 0$.
\end{lemma}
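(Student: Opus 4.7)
The plan is to reduce the lemma to two observations: first, that compliance of both parties forces the protocol to the single swap outcome $[a \hookrightarrow B, b \hookrightarrow A]$; second, that this outcome yields strictly positive utility for both parties.

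For the first observation, I would walk through the four rounds of the protocol described above, invoking the synchronous timing assumption (the known bound $\Delta$ on message propagation and the fact that executions proceed in four-phase rounds) to justify that each intended observation is actually made before the next deadline fires. In Step~\ref{aliceEscrows}, compliant Alice generates $(s,h)$ and escrows $a$ at $C_A$ with a two-round refund timeout. In Step~\ref{bobEscrows}, compliant Bob, reading $C_A$'s public state, sees the escrow, extracts $h$, and escrows $b$ at $C_B$ with a one-round timeout. In Step~\ref{aliceClaims}, compliant Alice reads $C_B$'s state, confirms the escrow, and sends $s$ to $C_B$ within its deadline, so $C_B$ transfers $b$ to her and $s$ becomes part of $C_B$'s public state. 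In Step~\ref{bobClaims}, compliant Bob reads $s$ from $C_B$ and forwards it to $C_A$ within its two-round window, causing $C_A$ to transfer $a$ to him. Thus the execution function satisfies $\Xi(\cP)=\set{[a \hookrightarrow B, b \hookrightarrow A]}$ as already recorded in~(\nakedeqnref{bothComply}).

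For the second observation, I plug this unique output into Alice's utility function~(\nakedeqnref{aliceUtility}), obtaining $U(I_P,I_C,O_C)[A]=1>0$, and by the symmetric definition of Bob's utility, $U(I_P,I_C,O_C)[B]=1>0$. Since every $O_C \in \Xi(\cP)$ equals $[a \hookrightarrow B, b \hookrightarrow A]$ and the party set is $\cP = \set{A,B}$, the liveness condition $U(I_P,I_C,O_C)[P]>0$ holds for every $P \in \cP$.

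The only step that takes any real care is the timing argument in Round~\ref{aliceClaims} and Round~\ref{bobClaims}: one has to make sure that the escrow deadlines set by $C_A$ and $C_B$ are large enough to accommodate one round of observation plus one round for the reply to reach the contract, given the round structure of \secref{crosschain_model}. This is the part I expect could become slightly delicate if the timing bounds were tight, but with the stated two-round and one-round timeouts the reveal of $s$ at $C_B$ comfortably precedes $C_B$'s refund, and the subsequent reveal at $C_A$ comfortably precedes $C_A$'s refund. Everything else is a direct evaluation of the utility table.
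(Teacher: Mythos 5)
Your proposal is correct and follows essentially the same route as the paper: identify the unique outcome $\Xi(\cP)=[a \hookrightarrow B, b \hookrightarrow A]$ and evaluate the utility function~(\nakedeqnref{aliceUtility}) to get utility $1>0$ for both parties. The only difference is that you re-derive this value of $\Xi(\cP)$ by walking through the protocol rounds and the timing bounds, whereas the paper simply cites it from the protocol's formal specification~(\nakedeqnref{bothComply}); that extra justification is harmless but not needed for the lemma as stated.
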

\begin{proof}
  Recall that $\Xi(\cP) = [a \hookrightarrow B, b \hookrightarrow A])$,
  which has utility 1 for both Alice and Bob (\eqnref{aliceUtility}).
\end{proof}

\begin{lemma}[Safety]
  \lemmalabel{safety}
  No compliant party ends up with negative utility.
\end{lemma}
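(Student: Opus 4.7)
The plan is to reduce the claim to a finite case analysis on the compliance set $\cQ \subseteq \cP = \{A,B\}$. The case $\cQ = \cP$ follows immediately from \lemmaref{liveness} (both parties receive utility $1 > 0$), and $\cQ = \emptyset$ is vacuous since there are no compliant parties to check. So only the two single-compliance cases $\cQ = \{A\}$ and $\cQ = \{B\}$ require work.

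For $\cQ = \{A\}$, I would simply read off the outcomes enumerated in the definition of $\Xi(\{A\})$ and evaluate Alice's utility on each using \eqnref{aliceUtility}: the outcome $[a \hookrightarrow A, b \hookrightarrow B]$ (Bob fails to escrow, or Bob fails to claim) yields utility $0$, and $[a \hookrightarrow A, b \hookrightarrow A]$ yields utility $2$, both non-negative. The case $\cQ = \{B\}$ is symmetric: Bob's utility on the three outcomes of $\Xi(\{B\})$ is $0$ (Alice fails to escrow), $0$ (Alice fails to claim), and $2$ (Alice reveals $s$ without claiming, so Bob acquires both assets), again all non-negative.

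The arithmetic is routine; the real content of the proof is justifying that $\Xi$ as specified genuinely exhausts the reachable outcomes under deviation. The crucial observation is that the outcome $[a \hookrightarrow B, b \hookrightarrow B]$, the unique outcome with utility $-1$ for Alice, cannot arise while Alice is compliant, because $C_A$ transfers $a$ out only upon presentation of the preimage $s$; by one-wayness of $H$ no other party can produce $s$ before the timeout unless Alice, violating compliance, reveals it. Symmetrically, $[a \hookrightarrow A, b \hookrightarrow A]$, the unique outcome with utility $-1$ for Bob, cannot occur while Bob is compliant. Thus the principal obstacle is not the per-outcome utility computation but rather grounding the exhaustiveness of $\Xi$ in the escrow-and-timeout structure of the protocol together with the cryptographic one-wayness of $H$; once that is articulated, the inequality $U(I_P,I_C,O_C)[Q] \geq 0$ drops out of the utility table by inspection.
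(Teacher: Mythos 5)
Your proof is correct and takes the same route as the paper: a case analysis on the compliance set, reading the compliant party's non-negative utilities directly off $\Xi(\{A\})$ and $\Xi(\{B\})$ (the paper silently omits the vacuous $\cQ=\emptyset$ case, which you handle explicitly). Your added discussion of why $\Xi$ exhausts the reachable outcomes goes beyond the paper, which treats $\Xi$ as part of the protocol's formal definition and so takes it as given; it is a worthwhile observation, but note that a compliant Alice \emph{does} reveal $s$ --- in the very message that claims $b$ from $C_B$ --- so the unreachability of $[a \hookrightarrow B, b \hookrightarrow B]$ under her compliance rests on that coupling of reveal-with-claim and the timeouts, not on one-wayness of $H$ alone.
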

\begin{proof} 
By case analysis:
\begin{itemize}
\item
  If both parties are compliant,
  both have positive utility by \lemmaref{liveness}.

\item
  If Alice alone is compliant,
  \begin{equation*}
    \Xi(\set{A}) = \set{
      [a \hookrightarrow A, b \hookrightarrow B],
      [a \hookrightarrow A, b \hookrightarrow A]
    }
  \end{equation*}
  Each of these outcomes has non-negative utility for Alice (\eqnref{aliceUtility}).

\item
  If Bob alone is compliant,
  \begin{equation*}
    \Xi(\set{B}) = \set{
      [a \hookrightarrow A, b \hookrightarrow B],
      [a \hookrightarrow B, b \hookrightarrow B]}
  \end{equation*}
  Each of these outcomes has non-negative utility for Bob.
\end{itemize}    
\end{proof}

\begin{lemma}[Coalition Nash Equilibrium]
  \lemmalabel{equilibrium}
  No coalition of parties can increase its collective utility
  by deviating from the protocol while the others comply.
\end{lemma}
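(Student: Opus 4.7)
The plan is to verify the inequality by enumeration over all non-empty coalitions, leveraging that there are only two parties and that both $\cI_P$ and $\cI_C$ are singletons, so the execution function reduces to $\Xi: 2^\cP \to \cO_C$ as in the protocol description. The non-trivial coalitions are $\cQ \in \set{\set{A}, \set{B}, \set{A,B}}$, and by \eqnref{bothComply} the sole conforming outcome is $O_C = [a \hookrightarrow B, b \hookrightarrow A]$, yielding $U[A] = U[B] = 1$ via \eqnref{aliceUtility} and its symmetric counterpart.

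First I would handle the singleton coalitions. For $\cQ = \set{A}$, the deviating outcomes are $\Xi(\cP \setminus \cQ) = \Xi(\set{B})$, which by the protocol specification consists of $[a \hookrightarrow A, b \hookrightarrow B]$ and $[a \hookrightarrow B, b \hookrightarrow B]$. Reading off Alice's utility from \eqnref{aliceUtility} gives $0$ and $-1$, both at most the conforming value $1$, so Alice cannot strictly gain by unilaterally deviating. By the symmetry of the two parties' utility functions, the identical argument applied to $\Xi(\set{A}) = \set{[a \hookrightarrow A, b \hookrightarrow B], [a \hookrightarrow A, b \hookrightarrow A]}$ handles $\cQ = \set{B}$.

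The main obstacle is the grand coalition $\cQ = \set{A,B}$, where $\cP \setminus \cQ = \emptyset$ and $\Xi(\emptyset)$ contains all four output contract state vectors in $\cO_C$. Here I must compare the conforming collective utility $1 + 1 = 2$ against the collective utility of every alternative. Using \eqnref{aliceUtility} and its symmetric version, I would tabulate: the two asymmetric ``one party acquires both assets'' outcomes each yield $2 + (-1) = 1$, and the ``no transfer'' outcome yields $0 + 0 = 0$. The maximum is $2$, attained only by the conforming outcome, so the coalition cannot strictly increase its joint utility by jointly deviating.

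The interesting aspect of the grand-coalition case is that it is the only place where the numeric choice of utilities (rather than merely their signs) matters: the property that acquiring a second asset is worth strictly less than the other party's loss of that asset is precisely what makes the symmetric swap the unique joint maximizer. Beyond this observation, the remainder of the proof is a straightforward table lookup against \eqnref{aliceUtility}.
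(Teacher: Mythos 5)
Your proof is correct and, for the cases the paper actually considers, it follows the same route: enumerate the coalitions and compare utilities outcome by outcome against \eqnref{aliceUtility}. Your two singleton cases coincide exactly with the paper's proof, which argues that since at least one party is assumed compliant, the only coalitions to check are $\set{A}$ and $\set{B}$, and then reads off utilities $0$ or $-1$ for the deviator versus $1$ for conforming. The one place you diverge is the grand coalition $\cQ=\set{A,B}$: the paper dismisses it outright (its standing assumption is that every execution has at least one compliant party, and the correctness definition quantifies over coalitions $\cQ \subset \cP$), whereas you verify it explicitly by tabulating collective utilities over all four outcomes in $\Xi(\emptyset)$. That extra check is accurate (totals $1$, $2$, $0$, $1$, with maximum $2$ matching the conforming collective utility) and harmless; it even yields the slightly stronger observation that the swap is the unique joint maximizer, which the paper never needs. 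The only nitpick is phrasing: the maximizing state in $\Xi(\emptyset)$ is itself a ``deviating'' execution whose output happens to coincide with the conforming one, so the right conclusion is the non-strict inequality required by the definition, not that deviators can never reach utility $2$.
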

\begin{proof}
Since there are only two parties, of which one is compliant,
coalitions are singletons.
  \begin{itemize}
  \item
    Suppose the coalition is Alice.
    If both comply (\eqnref{bothComply}),
    Alice's utility is 1 (\eqnref{aliceUtility}).
    If Alice deviates while Bob complies (\eqnref{bobComplies}),
    Alice's utility is either 0 or -1 (\eqnref{aliceUtility}).

  \item
    Suppose the coalition is Bob.
    If both comply (\eqnref{bothComply}), Bob's utility is 1.
    If Bob deviates while Alice complies (\eqnref{aliceComplies}),
    Bob's utility is either 0 or -1.
  \end{itemize}
\end{proof}

\section{Conclusions}
\seclabel{conclusion}
This paper has proposed the \emph{smart contract} model
for
cross-chain protocols in which  parties, some honest and some not, interact through trusted smart contracts residing on multiple, independent ledgers.
As a case study, it  presented a simple two-way  cross-chain swap protocol, 
solving a fundamental task in this setting.
Many other tasks have been considered in real systems that would be interesting to formalize
in the smart contract model, such  as auctions, loans and options.
  
The smart contract model differs from classical
models of distribute computing,
refactoring the roles of active participants (processes, parties)
and of passive communication (objects, smart contracts).
In classical models,
Byzantine processes interact with Byzantine processes,
while in the smart contract model,
Byzantine parties interact with honest contracts.
The  model requires a nuanced, game theoretic notion of correctness distinct from that of classical models.

The model assumes smart contracts cannot communicate with one another,
because of the lack of a practical way
to ensure that replaying  such cross-contract communication would always
and everywhere produce exactly the same results.
Nevertheless,
there are emerging real-world mechanisms that do support limited
forms of cross-chain communication.
An \emph{Oracle}~\cite{Chainlink} is a mechanism that allows a contract to read data
from an external source (for example, the current dollar/euro exchange rate).
Token bridges~\cite{axelar,Wormhole,layerzero} provide a way to effectively
transfer assets from one ledger to another by ``freezing'' an asset
at the source ledger and creating a matching ``wrapped'' asset at the target ledger.
So-called \emph{layer two} solutions~\cite{rollups,KalodnerGCWF2018,plasma}
allow computations to be moved from one ledger to another.
We leave it to future work to expand the model presented here to encompass
these increasingly important technologies.

The model's computational power can vary depending on access to
cryptographic primitives.
For example,
cryptographic hashes enable atomic cross-chain swaps~\cite{bitcoinwiki,bip199,decred,Herlihy2018,tiersnolan,barterdex,ZakharyAE2019,Catalyst} and payment networks~\cite{DeckerW2015,bolt,HeilmanLG2019,raiden,PoonD2016}.
Public key infrastructure supports atomic broadcast~\cite{Herlihy2018,HerlihyLS2021}.
To fully capture both the distributed and cryptographic aspects of these applications
it might be necessary to extend the model in a way similar to Canetti et al.~\cite{CanettiCKLP07}.
Moreover,
it might be intriguing to extend the model to encompass probabilistic behavior,
perhaps using extensions to I/O automata similar to those of Civit and Potop~\cite{CivitP22b}.

\bibliographystyle{splncs04}
\bibliography{references,zotero,url}

\appendix

\end{document}